\newtheorem{theorem}{Theorem}
\theoremstyle{definition}
\newcommand{\ket}[1]{{\left| #1 \right\rangle}}
\theoremstyle{definition}
\begin{document}

\markboth{James Ostrowski}
{Lower Bounds on Circuit Depth of the Quantum Approximate Optimization Algorithm}

\title{Lower Bounds on Circuit Depth of the Quantum Approximate Optimization Algorithm}

\author{James Ostrowski}
\email{jostrows@utk.edu}
\affiliation{
	Department of Industrial and Systems Engineering, University of Tennessee at Knoxville\\Knoxville, Tennessee  37996-2315 USA}
	
\author{Rebekah Herrman}
%\email{rherrma2@utk.edu}
\affiliation{
	Department of Industrial and Systems Engineering, University of Tennessee at Knoxville\\Knoxville, Tennessee  37996-2315 USA}

\author{Travis S. Humble}
%\email{humlets@ornl.gov}
\affiliation{
	Quantum Computing Institute\\ Oak Ridge National Laboratory\\ Oak Ridge, Tennessee 37830 USA}
	
%\author{Phillip Lotshaw}
%\email{humlets@ornl.gov}
%\affiliation{
%	Quantum Computing Institute\\ Oak Ridge National Laboratory\\ Oak Ridge, TN}
	
\author{George Siopsis}
%\email{gsiopsis@utk.edu}
\affiliation{
	Department of Physics and Astronomy, University of Tennessee at Knoxville\\Knoxville, Tennessee 37996-1200 USA}
	
%	\author{Jifeng Fan}
%\email{}
%\affiliation{
%	Department of Physics and Astronomy, University of Tennessee at Knoxville\\Knoxville, Tennessee USA 37996-2315}

\begin{abstract}
The quantum approximate optimization algorithm (QAOA) is a method of approximately solving combinatorial optimization problems. While QAOA is developed to solve a broad class of combinatorial optimization problems, it is not clear which classes of problems are best suited for it. One factor in demonstrating quantum advantage is the relationship between a problem instance and the circuit depth required to implement the QAOA method. As errors in NISQ devices increases exponentially with circuit depth, identifying lower bounds on circuit depth can provide insights into when quantum advantage could be feasible. Here, we identify how the structure of problem instances can be used to identify lower bounds for circuit depth for each iteration of QAOA and examine the relationship between problem structure and the circuit depth for a variety of combinatorial optimization problems including MaxCut and MaxIndSet. Specifically, we show how to derive a graph, $G$, that describes a general combinatorial optimization problem and show that the depth of circuit is at least the chromatic index of $G$. By looking at the scaling of circuit depth, we argue that MaxCut, MaxIndSet, and some instances of Vertex Covering and Boolean satisifiability problems are suitable for QAOA approaches while Knapsack and Traveling Sales Person problems are not. 

\end{abstract}

\maketitle

%%%%%%%%%%%%%%%%%%%%%
%%%%%%%%%%%%%%%%%%%%%

\section{Introduction}

In 2014, Farhi, Goldstone, and Gutmann introduced the quantum approximate optimization algorithm (QAOA) to approximately solve combinatorial optimization problems \cite{farhi2014quantum}. In classical combinatorial optimization, problems are defined by $n$ bits and $m$ clauses. To solve optimization problems using QAOA, the clauses are converted to Hamiltonians, and the state of the graph is initially $\ket{s}=\frac{1}{\sqrt{2^n}} \Sigma_{z} \ket{z}$, where $\{\ket{z}\}$ is the computational basis. For $p \in \mathbb{N}$, the $p$-level QAOA requires $2p$ angles, $\bm{\vec{\gamma}}=(\gamma_1, ..., \gamma_p)$ and $\bm{\vec{\beta}}=(\beta_1, ... , \beta_p)$ and alternates between the mixing Hamiltonian, $B$, and the problem Hamiltonian, $C$, to generate the state 
\begin{equation*}
    \ket{\psi(\bm{\vec{\gamma}}, \bm{\vec{\beta})}} = U(B, \beta_p)U(C, \gamma_p)...U(B, \beta_1)U(C, \gamma_1)\ket{s}
\end{equation*}
 where $U(A,\phi)= e^{-iA\phi}$. $B$ and $C$ depend on the problem of interest and the angles that maximize them can be found using classical preprocessing \cite{guerreschi2017practical, streif2019training, shaydulin2019multistart}.
 
Previously, QAOA has been used to solve bounded constraint problems \cite{farhi2014bounded} and has been studied on near-term devices \cite{zhou2018quantum}. Additionally, it has been used to look at lattice protein folding \cite{fingerhuth2018quantum}, the Max-k vertex cover problem \cite{cook2019quantum} and inspired an approach for solving linear systems using quantum computing \cite{huang2019near}. MaxCut and maximum independent set are examples of two problems that have been well studied with QAOA \cite{saleem2020, wang2018quantum, crooks2018performance, guerreschi2019qaoa}. Both can be represented as quadratic unconstrained problems, % where $P = \{ \emptyset \}$ and $c$ is a quadratic function,
otherwise known as a QUBO. %While it is possible to recast any combinatorial optimization problem as an polynomial unconstrained binary optimization problem (PUBO), doing so can be inelegant. 
It has also been shown to exhibit a form of computational advantage in the sense that the output of low depth circuits cannot be efficiently classically simulated \cite{farhisupremacy} and general strategies have been studied for implementing it on hardware graphs \cite{wang2019xy}. 

In this paper we investigate the potential of using quantum computing to solve combinatorial optimization problems of the form
\begin{align}\label{eq:co}
& \min \  c(x) \\
& \mbox{s.t. }  p_i (x) \leq b_i & \forall i \in P\\
& x \in \{0,1\}^n
\end{align}
 \noindent where both $p_i$, contained in the collection of polynomial constraints $P$, and $c$ are polynomial functions in $\mathbb{R}^n[x_1, x_2, ... , x_n ]$ and $b_i \in \mathbb{R}$.

We identify the relationship between combinatorial optimization problems and the corresponding depth of circuit for QAOA approaches to solving this problem. Xue, Chen, Wu, and Guo showed that the cost function for QAOA decreases with the number of gates and level of noise in NISQ devices \cite{xue2019effects}, so in this paper, we specifically focus on circuit depth, although an equally important component of the fidelity of a solution is the number of iterations needed.  We only look at a single iteration because we consider all iterations have the same depth.
 
 In Sec. \ref{background}, we define graph theory terms that will be used throughout the paper. Next, in Sec. \ref{pubomapping}, we discuss how to map arbitrary combinatorial optimization problems to polynomial unconstrained binary optimization problems (PUBOs) by dualizing constraints, and apply the method to MaxCut, Maximum Independent Set, and a general combinatorial optimization problem. Additionally, we discuss how use the PUBOs to derive a hypergraph that represents a specific optimization problem and show that one plus the chromatic index of the hypergraph is equal to the depth of QAOA circuit needed to run a combinatorial optimization problem. Using this result, we analyze the depth of circuit for the MaxCut, Maximum Independent Set, and general combinatorial optimization problems. We then consider Vertex Covering, Knapsack, Traveling SalesPerson, and Boolean satisfiability problems, determine the depth of circuit required to use QAOA to solve them, and discuss the feasibility of performing them on NISQ devices in Sec. \ref{problemanalysis}. Finally, in Sec. \ref{conclusion}, we discuss avenues for future work.% and show that sparser conditions does not necessarily imply a shorter depth of circuit.
%Additionally, we discuss how use the PUBOs to derive a hypergraph that represents a specific optimization problem. In doing this, we determine the importance of the edge chromatic number of graphs derived from the problem instance when solving optimization problems in which each monomial in the constraint consists of two unique variables and extend it to general combinatorial optimization problems. Specifically, the depth of circuit is $\chi'(G)+1$, where $G$ is the graph derived from the combinatorial optimization problem and $\chi'(G)$ is the chromatic index of $G$.

\section{Background}\label{background}

 In this section, we define graph theory terms that will be used in upcoming sections. An \emph{edge coloring} of a simple graph $G=(V,E)$ is a labeling $f: E \longrightarrow [k]$, where each number represents a color. An edge coloring is \emph{proper} if for all edges $uv$ and $xv$, $f(uv) \neq f(xv)$. The smallest number of colors needed for a proper coloring of $G$ is the edge chromatic number, sometimes referred to as the \emph{chromatic index}, denoted $\chi '(G)$, and we say all edges with the same label belong to the same color class. A well known result by Vizing states that $\chi ' (G) \in \{\Delta, \Delta+1\}$, where $\Delta$ is the maximum degree of $G$ \cite{vizing1964estimate}. 

A \emph{hypergraph} $H= (V_H,E_H)$ is a generalization of a graph in which an edge may join more than two vertices. If there are $n$ vertices in $H$, then $E \subset P \setminus \{\emptyset\}$. $H$ is \emph{linear} if two edges share at most one vertex, and it is \emph{k-uniform} if all edges contain exactly $k$ vertices. A \emph{hypergraph clique} is a collection of edges, $H_c \subset E_H$, such that every element of $H_c$ is pairwise intersecting. A \emph{proper hypergraph edge coloring} is analogous to an edge coloring of a graph in that if a vertex is contained in multiple edges, they all receive distinct colors.

%\section{General Methods}\label{generalmethods}

\section{Mapping arbitrary combinatorial optimization problems to PUBO}\label{pubomapping}

When considering combinatorial optimization problems, we will use the method of dualizing constraints to solve them and analyze circuit depth. Other methods may give different results. Consider a constraint $p_i(x) \leq b_i$, where $x=\{x_1, ..., x_n\}$. We can {\em dualize} this constraint by penalizing any solution $x'$ with $p_i(x') \geq b_i$ as follows. Let $\underline{p}_i = \min\limits_{x \in \{0,1\}^n} p_i(x)$. The ``most feasible'' solution with respect to constraint $i$ is going to be $feas_i = \underline{p}_i - b_i$ away from the constraint.  Let $k_i = \lceil \ln{feas_i} \rceil$. We can omit constraint $i$ from the set of constraints and add the term

\begin{equation}\label{omiti}
    \lambda_i\left(p_i(x) + \sum_{j  \in [k_i]} 2^j \delta_{ij} - b_i\right)^2
\end{equation}

\noindent where $\lambda_i$ is any large, positive parameter penalizing violation of constraint $i$ and $\delta_{ij}$ are additional binary variables. 

In multiplying out the above constraint, we get
\begin{equation}\label{expanded}
 \lambda_i\left(p_i(x)^2 + 2\sum_{j = \in [k_i]} 2^j p_i(x) \delta_{ij} - 2b_i p_i(x) +\left(\sum_{j\in [k_i]} 2^j \delta_{ij}\right)^2 - 2\sum_{j \in [k_i]} 2^j b_i \delta_{ij} + b_i^2\right).
 \end{equation}

The cost of this transformation is an increase in the potentially large number of new $\delta_{ij}$ variables.

Using the above process, we can write any combinatorial optimization problem of type~\eqref{eq:co} as

\begin{equation}\label{generalcoproblem}
\min_{x \in \{0,1\}^n, \delta \in \{0,1\}^{k_i} }  c(x) + \sum_{i \in P} \lambda_i\left(p_i(x)^2 + 2\sum_{j  \in [k_i]} 2^j p_i(x) \delta_{ij} - 2b_i p_i(x) +\left(\sum_{j\in [k_i]} 2^j \delta_{ij}\right)^2 - 2\sum_{j \in [k_i]} 2^j b_i \delta_{ij} + b_i^2\right).
\end{equation}

If all $p_i$ constraints are linear and $c$ is quadratic, the resulting unconstrained problem is a QUBO. Simplifying notation, we can think of a combinatorial optimization problem as the sum of monomials of the polynomial $p_i(x)$, 

$$ \min_{x\in \{0,1\}, \delta \in \{0,1\}^{k_i} }  \sum_{m_i \in \ M_i} m_i(x,\ \delta), $$
\noindent where $M_i$ is the set of monomials of $p_i(x)$ 

\subsubsection{Examples}\label{examplespartone}
In this section, we give examples of problems and how to map them to PUBOs. 

\textbf{Example: MaxCut}\label{pubomaxcutsetup}

In the combinatorial optimization problem MaxCut, the vertices of a graph, $G=(V,E)$, are partitioned into two sets such that the number of edges with an end point in each set is maximized. This problem can be formulated as
 $$\min_{x \in \{0,1\}^n} \sum_{ij \in E(G)} x_j(x_i-1) + x_i(x_j-1)= \min_{x \in \{0,1\}^n} \sum_{ij \in E(G)} 2x_ix_j - x_i - x_j$$

Note \noindent that $P= \{ \emptyset \}$, so there are no $\delta_{ij}$ terms when the problem is dualized. For example, consider the wheel graph on six vertices, $W_6$, as seen in Figure \ref{wheel3}. 

In this example, we want to minimize 

\begin{align*}
&2x_1x_2 - x_1 - x_2 + 2x_1x_3 - x_1 - x_3 + 2x_1x_4 - x_1 - x_4 + \\
&2x_1x_5 - x_1 - x_5 + 2x_1x_6 - x_1 - x_6 + 2x_2x_3 - x_2 - x_3 + \\
&2x_2x_6 - x_2 - x_6 + 2x_3x_4 - x_3 - x_4 + 2x_4x_5 - x_4 - x_5 + 2x_5x_6 - x_5 - x_6
\end{align*}
\noindent where $x_i \in \{0,1\}$ for all $i \in [6]$.

\textbf{Example: Maximum Independent Set}\label{pubomaxindsetsetup}

\begin{figure}
 \centering
 \begin{tikzpicture}[scale=2]
\begin{scope}[every node/.style={scale=1,circle,draw}]
    \node (A) at (0,0) {$v_1$};
    \node (B) at (1.25,.5) {$v_2$};
	\node (C) at (0,1.25) {$v_3$}; 
	\node (D) at (-1.25,.5) {$v_4$};
	\node (E) at (-.75,-1) {$v_5$};
	\node (F) at (.75,-1) {$v_6$};
	
\end{scope}

\draw  [blue, dashed]  (A) -- (B);
\draw  [green]  (A) -- (C);
\draw  [densely dotted, red]  (A) -- (D);
\draw  [loosely dash dot, orange]  (A) -- (E);
\draw  [loosely dotted] (A) -- (F);
\draw  [densely dotted, red]  (B) -- (C);
\draw  [blue, dashed]   (C) -- (D);
\draw  [green]   (D) -- (E);
\draw  [blue, dashed]  (E) -- (F);
\draw  [green]  (F) -- (B);

\end{tikzpicture}
\caption{The wheel graph on six vertices with the edges properly colored. There are five color classes: solid green, dashed blue, densely dotted red, loosely dotted black, and dotted dashed orange.}
\label{wheel3}
\end{figure}
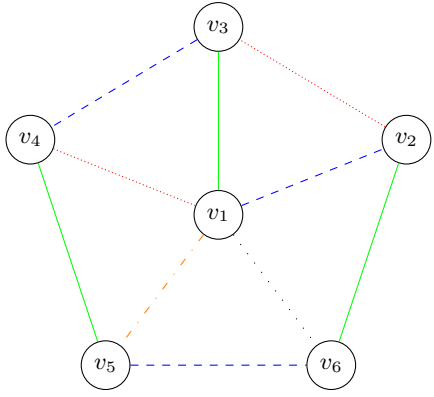

Let $G=(V,E)$ be a simple, undirected graph. In the maximum independent set problem, often denoted MaxIndSet, the goal is to find the largest set of independent vertices, or vertices that are not pairwise adjacent. This problem can be written as 

\begin{align}
    & \max\  \sum_{i \in V} x_i & \\
    & \mbox{s.t. } x_ix_j = 0 & \forall (i,j) \in E\\
    & x \in \{0,1\}^n &
\end{align}

\noindent $P \neq \{\emptyset\}$, as $|P| = |E|$, but for all $p_i$, $\underline{p_i} = 0$, so $\mathtt{feas}_i = 0$ for all $i$, and the new formulation is
$$\max_{x \in \{0,1\}^n} \sum_i x_i + \sum_{(i,j,)\in E} \lambda x_ix_j.$$

For the graph given in Fig. \ref{wheel3}, the resulting optimization function is:

\begin{align*}
    x_1 + x_2 + x_3 + x_4 + x_5 + x_6 + \\
    \lambda (x_1x_2 +  x_1x_3 +  x_1x_4 +  x_1x_5 +  x_1x_6 + \\
     x_2x_3 + x_3x_4 + x_4x_5 +  x_5x_6 +  x_2x_6)  
    \end{align*}

    %\noindent Note that the above objective function is the classical QUBO formulation for independent set~\cite{}.

\textbf{Example: General optimization problem}\label{pubomgeneralsetup}

As a final example, consider 

%\begin{align}\label{eq:generalexample}
%& \max \  \sum_{i\in [3]} x_i \\
%& \mbox{s.t. }  2x_1x_2+2x_2x_3+4x_1x_3 \leq 6 \\
%& x_i \in \{0,1\}
%\end{align}

%The constraint can be divided by the greatest common divisor of each term to get the equivalent problem 

\begin{align}\label{eq:reducedgeneralexample}
& \max \  \sum_{i\in [3]} x_i \\
& \mbox{s.t. }  x_1x_2+x_2x_3+2x_1x_3 \leq 3 \\
& x_i \in \{0,1\}
\end{align}

Now, $\underline{p_i} = 0$, so $\mathtt{feas} = 3$. Dualizing the first constraint, we get
\begin{align*}
& \max(x,\lambda) \sum_{i \in [3]} x_i + \lambda(x_1x_2+x_2x_3+2x_1x_3-\delta_{11}-2\delta_{12}-3)^2 \\
 & =\sum_{i \in [3]} x_i  + \lambda(-5x_1x_2+10x_1x_2x_3-5x_2x_3-8x_1x_3-2\delta_{11}x_1x_2- \\
 & 2\delta_{11}x_2x_3-4\delta_{11}x_1x_3-4\delta_{12}x_1x_2-4\delta_{12}x_2x_3- 7\delta_{11}+16\delta_{12}+ 4\delta_{11}\delta_{12}+9).
\end{align*}

\subsection{A QAOA Approach}\label{qaoa}

%QAOA alternates phase-shifting and mixing operators. 

This section assumes we are optimizing a problem of the form $\min_{x \in \{0,1\}^n} \sum_{m_i \in M} m_i(x)$. The natural extension of QAOA on general PUBOs is to define the unitary operator $U(C,\gamma) = e^{-i \gamma C} = \prod_{m_i \in M} e^{-i\gamma m_i} = \prod_{m_i \in M} U(m_i, \gamma)$, while the mixing operator remains $U(B,\beta) = e^{-i \beta B}$ where $B = \Sigma_{v \in V(G)} B_v$, $B_v= \sigma_v^x$ for $v \in V(G)$ and $\sigma_v^x$ is the Pauli X operator acting on qubit $v$. $U(C,\gamma)$ can be compiled on a circuit by decomposing it into a sequence of gates performing all $U(m_i, \gamma)$ operators. The number of qubits each $U(m_i, \gamma)$ acts on is the number of variables in monomial $m_i$, which depends on the size of the support of $p_i$, denoted $supp(p_i)$. We seek to explore the relationship between the structure of the monomial and the minimum depth required for a quantum circuit to optimize such a function. 

First, we assume that all $m_i$ have been combined optimally to fit the hardware, meaning each polynomial has size at most the maximal gate size the hardware supports, and any monomials that can be combined and fit on one gate have been combined. Although current hardware currently supports gate width of two, we look at larger gate width for completeness. Operators $U(m_i, \gamma)$ and $U(m_j, \gamma)$ cannot be performed in parallel unless they act on disjoint sets of qubits. With that in mind, we construct a proper hypergraph edge coloring that minimizes the total depth of circuit, where edges of the same color represent sets of operators that can be performed in parallel. We let $H = (V_H,E_H)$ be such a hypergraph where $V_H = \{1,\ \ldots,\ n\}$ and $E_H$ consists of edges $e_i = supp(m_i)$ for all $m_i\in M$.

\begin{theorem}
Every proper edge coloring of $H$ corresponds to a valid circuit for $PUBO$, where the depth of the shallowest circuit is $\chi'(H)+1$.
\end{theorem}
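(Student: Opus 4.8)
The plan is to identify the scheduling of the gates making up $U(C,\gamma)$ with a proper edge coloring of $H$, and then to charge a single extra layer to the mixing operator. The enabling observation, which I would establish first, is that the cost Hamiltonian $C$ is diagonal in the computational basis, so every factor $U(m_i,\gamma)=e^{-i\gamma m_i}$ commutes with every other. Consequently the product $\prod_{m_i\in M}U(m_i,\gamma)=U(C,\gamma)$ is independent of the order of its factors and may be regrouped into arbitrary subproducts without changing the implemented unitary. This is precisely what licenses treating the gates as purely combinatorial objects.

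Next I would make the scheduling constraint explicit: two gates $U(m_i,\gamma)$ and $U(m_j,\gamma)$ may occupy the same circuit layer exactly when $supp(m_i)\cap supp(m_j)=\emptyset$, i.e.\ when the edges $e_i,e_j$ of $H$ share no vertex. A partition of the gates into valid parallel layers is therefore the same data as a partition of $E_H$ into classes of pairwise vertex-disjoint edges, which by the definition in Section~\ref{background} is exactly a proper edge coloring of $H$ (color classes $=$ layers). Given any proper edge coloring with $k$ colors, applying the gates of each color class as one layer yields, by the commutativity above, a correct realization of $U(C,\gamma)$ of depth $k$; this establishes the first assertion that every proper edge coloring corresponds to a valid circuit. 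For the minimum, the lower bound is immediate: any valid layering is a proper edge coloring and hence uses at least $\chi'(H)$ layers, while an optimal coloring attains $\chi'(H)$, so the shallowest realization of $U(C,\gamma)$ has depth exactly $\chi'(H)$.

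Finally I would account for the mixing operator. Since $B=\sum_{v}\sigma_v^x$ is a sum of commuting single-qubit terms on distinct qubits, $U(B,\beta)=\prod_v e^{-i\beta\sigma_v^x}$ executes in a single layer. Because $B$ and $C$ do not commute and every qubit lies in the support of $B$, this layer can neither be absorbed into the final cost layer nor interleaved among the cost layers to save depth, so it contributes exactly one additional layer, giving a shallowest circuit of depth $\chi'(H)+1$. I expect the main obstacle to be the conceptual step of the second paragraph—recognizing that the gate-scheduling problem is literally hypergraph edge coloring—together with the care needed to argue that the $+1$ is genuinely unavoidable; both rest on the commutativity of the $U(m_i,\gamma)$ and the non-commutativity of $B$ with $C$, so pinning those facts down cleanly at the outset is the crux.
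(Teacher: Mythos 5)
Your proposal is correct and follows essentially the same route as the paper's own proof: identify parallel gate layers with color classes of a proper edge coloring of $H$, conclude the cost unitary needs exactly $\chi'(H)$ layers, and add one layer for $U(B,\beta)$. In fact your version is slightly more careful than the paper's, since you explicitly justify the regrouping of the product $\prod_{m_i \in M} U(m_i,\gamma)$ via the commutativity of the diagonal terms and argue why the mixing layer cannot be absorbed, steps the paper leaves implicit.
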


\begin{proof}
%{\bf Embedding }
Let $v_av_b...v_d$ be the support of monomial $m_i$ and $v_ev_f...v_h$ be the support of monomial $m_j$ such that $\{v_a, v_b, ... , v_d\} \cap \{v_e, v_f, ... , v_h\} = \{\emptyset\}$. Then, $U(C_{m_i}, \gamma)$ and $U(C_{m_j}, \gamma)$ can be implemented simultaneously in a circuit. Since the intersection is empty, the edges may receive the same color in a proper coloring, but do not necessarily, as there may be several proper colorings of one graph. Thus, a proper edge coloring gives a feasible implementation of a circuit that can be used to perform MaxCut. There exists a coloring of $H$ that uses exactly $\chi'(H)$ colors, and by definition, any coloring that uses fewer colors is not proper. If the coloring is not proper, two edges that share a vertex have the same color and their corresponding gates cannot be implemented simultaneously. Hence, the depth of the shallowest circuit is $\chi'(H)+1$, as one must be added to account for $U(B,\beta)$.
\end{proof}

Determining the chromatic index of hypergraphs in general is a difficult problem. In 1972, Erd\"{o}s, Faber, and Lov\'{a}sz conjectured that the chromatic index of any linear hypergraph on $n$ vertices is at most $n$ \cite{erdHos1975problems}. Since then, the conjecture has been proven if $H$ satisfies $\Delta(H) \leq \sqrt{n+\sqrt{n}+1}$ \cite{paul2012edge}. Additionally, Chang and Lawler showed that the chromatic index of a hypergraph $H$ on $n$ vertices is at most $\lceil 1.5n-2 \rceil$ with no restriction on the degrees of the vertices. In 1992, Kahn showed that $\chi'(H) \leq n + o(n)$ for linear $H$ \cite{kahn1992coloring}. Note that since any two edges in a linear hypergraph intersect in at most one vertex, that is equivalent to saying any two monomials in an optimization problem share at most one common variable. As there are bounds on the chromatic index of linear hypergraphs, in a general combinatorial optimization problem, one could attempt to relax the problem such that for any two monomials $a$ and $b$, $|supp(a) \cap supp(b)| \leq 1$ in order to have a rough bound on the depth of the circuit.

In addition to linear hypergraphs, there has been work on bounding the chromatic index of $k$-uniform hypergraphs. Pippenger and Spencer proved that if a $k$-uniform hypergraph has minimum degree asymptotic to the maximum degree and asymptotic codegree negligible compared to the maximum degree, then for some $\delta>0$, $\chi'(H) \leq (1+\delta)\Delta(H)$ \cite{pippenger1989asymptotic}. Later, Alon and Kim showed that if $H$ is $k$-uniform and if any two edges have at most $t$ vertices in common and maximum degree sufficiently large as a function of $k$, then $|E_h| \leq (t-1+\frac{1}{t})\Delta(H)$, which bounds the chromatic index of $H$ from above \cite{alon1997degree}. As each edge in a $k$-uniform hypergraph contains $k$ vertices, it is equivalent to the original combinatorial optimization problem containing monomials that consist of precisely $k$ variables. Thus, the circuit depth of problems that can be written such that each monomial has the same size support can be bounded.

We can potentially combine $U(C_a, \gamma)$ and $U(C_b, \gamma)$ into $U(C_{a,b},\gamma)$, which could reduce the number of colors needed for the corresponding graph. Doing so, however, requires solving a potentially difficult optimization problem. Consider the problem:

\begin{align}
\min & \sum_{c \in C} z_c \\
\mbox{s.t. } & \sum_{s \in S} x^c_s \leq |S| - 1 & \forall\  c \in C,\ S \ \mbox {s.t. } |\cup_{s \in S} s| > L\\
& x_e^c \leq z_c & \forall \ e \in E, \ c \in C\\
& x,\ z \in \{0,1\}
\end{align}
where $L$ is the number of qubits in the largest gate the hardware can perform, $c$ is a color in the collection of colors $C$, $s$ is an edge of $S \subset E(\mathcal{H})$, and $x_s^c$ indicates that $s$ receives color $c$ in a particular proper coloring.
One obvious example of how to combine gates is, if for monomials $a$ and $b$, $supp(a) \subset supp(b)$ then the size of the gate required for $U(C_{a,b},\gamma)$ will be identical to $U(C_{b},\gamma)$. Thus, $U(C_a, \gamma)$ and $U(C_{a,b}, \gamma)$ can be combined since $supp(a) \cup supp(b) = supp(b)$.

Throughout the rest of this paper, we define the \emph{derived graph} as the graph corresponding to a combinatorial optimization problem whose vertex set consists of the variables in the problem and whose edges connect vertices that are found in a common monomial. The \emph{derived hypergraph} is similarly defined. 

%Suppose we have two monomials $a$ and $b$ such that $supp(a) \subset supp(b)$. Then, we can define a single gate, $C_{b}$ that describes both $a$ and $b$. In fact, the gate $U(C_a, \gamma)$ can be mapped into the gate $U(C_b, \gamma)$. \rh{I don't like this. need to expand more}

%\begin{theorem}
 % If $supp(a) \subset supp(b)$ for monomials $a$ and $b$, then only one gate, $C_{b}$ is needed in the circuit to describe both monomials.
%\end{theorem}

%\begin{proof}
 % Suppose $supp(a) \subset supp(b)$. Consider the two operators $U(C_a, \gamma)$ and $U(C_b,\gamma)$. When multiplied together, we get $U(C_a+C_b, \gamma)$. However, since $supp(a) \subset supp(b)$, $C_b$ has nonzero entries in the same positions that $C_a$ does \rh{check that claim, but should be true}. Thus, ...
%\end{proof}
  %Now, if there exists a gate, $C_{v_k}$ that depends on only one vertex, it adds at most depth one to any circuit, as all gates consisting of one qubit can be performed simultaneously. In the original QAOA implementation of MaxCut, the only single qubit gates are the $U(B, \beta)$, and no $C$ gate depends on only one vertex. If a polynomial $p_i$ contains two monomials $a$ and $b$ such that $supp(a) \subsetneq supp(b)$, is there a way to define a single gate that describes the two monomials? Or is a gate depending on a single vertex described entirely in $U(B, \beta)$?

%This method provides an upper bound for the depth of circuit because...
\subsubsection{Examples continued}

In this section, we analyze the structure of the resulting hypergraphs built from the examples in Section \ref{examplespartone} and discuss how this impacts the difficulty of performing each problem on NISQ devices.

\textbf{Example: MaxCut, continued} %\label{pubomaxcutcircuit}

 The support of the cost function is six, but each gate acts on two qubits in the circuit since each monomial has at most two unique $x_i$ terms. We can define gates $U(C_{i,j}, \gamma)$ for monomials that have two variables, $x_i$ and $x_j$, and gates $U(C_k, \gamma)$ for monomials in one variable, $x_k$.

	\begin{figure}[h]
	\centering
	\includegraphics[scale=0.75]{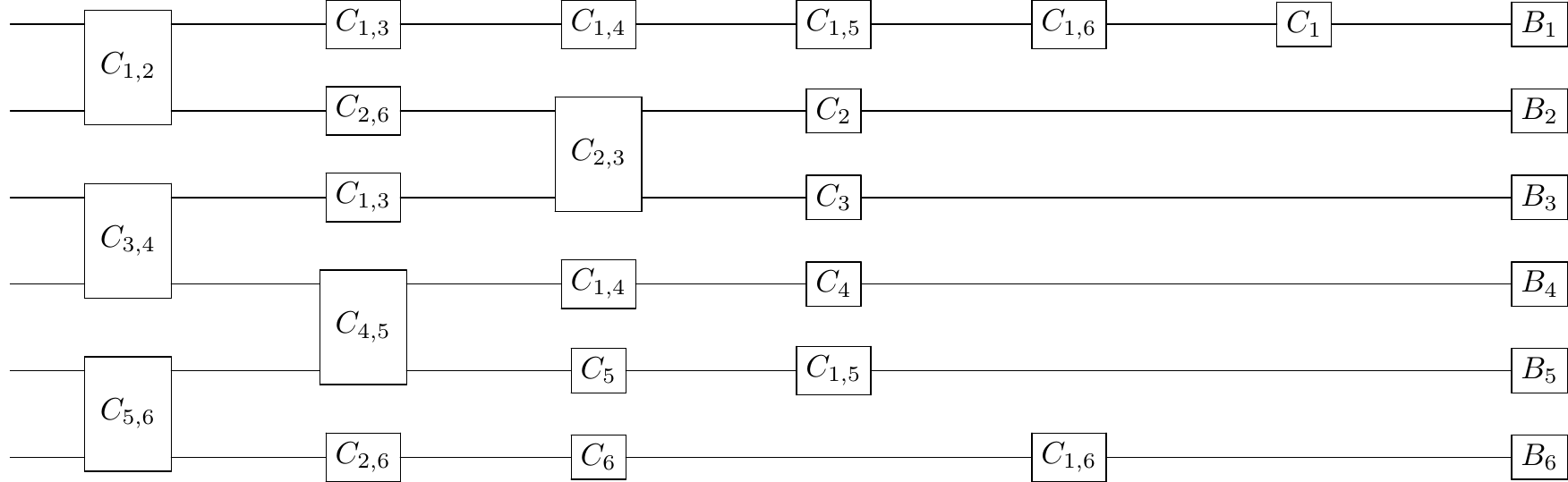}
	\caption{The circuit diagram for a $1$-level QAOA on the wheel graph on six vertices. We use the notation $C_{i,j}$ to represent $U(C_{i,j},\gamma)$ to make the image clearer to read. If two qubits are used in one gate but not next to each other in number order, the gate in the diagram appears split, but the pieces are in the same column.}
	\label{fig:pubocircuitdiagrammaxcut}
\end{figure}

 As $i, j, m,$ and $n$ must be unique in order to run $C_{i,j}$ and $C_{m,n}$ at the same time, we can color the edges of a graph $G$ and perform operators associated to the edges of the same color class at once. However, each gate $C_k$ can be run simultaneously, as each depends on precisely one qubit. Thus, the depth of the circuit for MaxCut is either $\chi '(G) + 1$ or $\chi '(G) + 2$, as one must be added to account for the $B$ gates, and the depth scales linearly with the number of iterations of the algorithm. Figure \ref{fig:pubocircuitdiagrammaxcut} is a circuit diagram for implementing MaxCut on $W_6$ using the PUBO mapping and QAOA approach, where the circuit has a depth of $\chi '(G) + 2$. %Note that in the QAOA MaxCut problem, the individual $C_k$ gates are not used in the circuit.

\textbf{Example: Maximum Independent Set, continued}

The support of the optimization function has size six, and each monomial is comprised of at most two variables. The circuit diagram for this example is the same as in \textbf{Example: MaxCut, continued}, as it contains the same monomials, up to constants and signs.

\textbf{Example: General Optimization Problem, continued}

Since several monomials in the function to optimize are contained in the support of others, the gates needed in the QAOA circuit are those acting on $x_1x_2x_3, \ x_1x_2\delta_{11}, \ x_1x_2\delta_{12}, \ x_1x_3\delta_{11}, \ x_1x_3\delta_{12},$ $\ x_2x_3\delta_{11}, \ x_2x_3\delta_{12}$ and $\delta_{11}\delta_{12}$, and the associated hypergraph and coloring for it is Fig. \ref{fig:generalpuboexample}. The circuit diagram for this example is seen in Fig. \ref{fig:pubocircuitgeneral}.

	\begin{figure}
	\centering
	\begin{subfigure}{.5\textwidth}
  \centering
  \includegraphics[width=.7\linewidth]{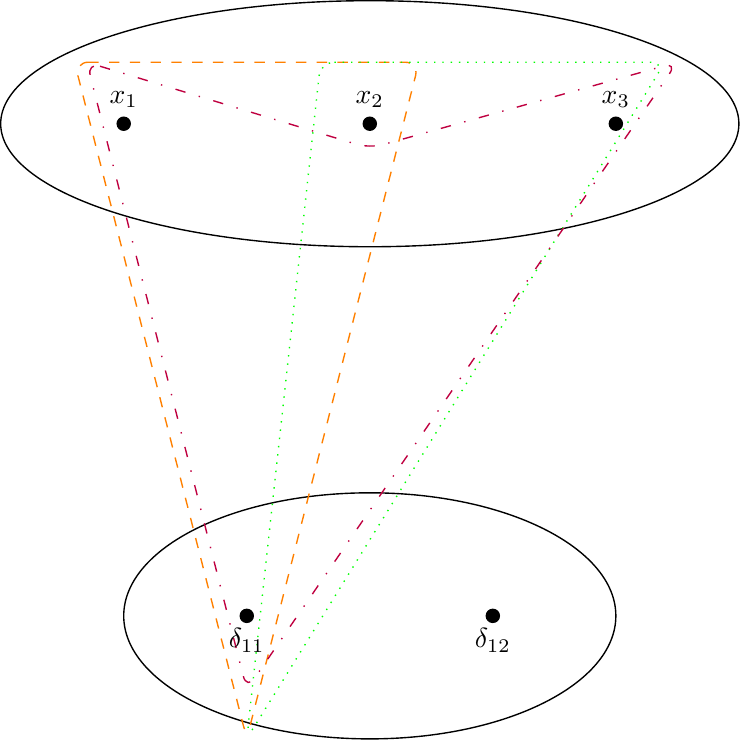}
  %\caption{A subfigure}
  \label{fig:sub1}
\end{subfigure}%
\begin{subfigure}{.5\textwidth}
  \centering
  \includegraphics[width=.7\linewidth]{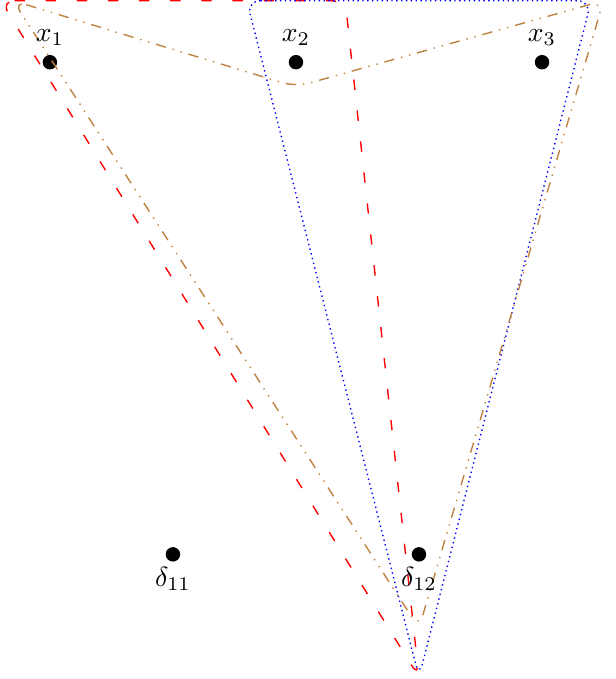}
  %\caption{A subfigure}
  \label{fig:sub2}
\end{subfigure}
\caption{The coloring for the hypergraph in \textbf{Example: General Optimization Problem}. It has vertices $x_1$, $x_2$, $x_3$, $\delta_{11}$, and $\delta_{12}$. The edges have been placed into two separate images to show the coloring more clearly, though the entire hypergraph contains the edges found in both figures. No colors are repeated between the left and right sides, and the hypergraph requires seven colors.}
\label{fig:generalpuboexample}
\end{figure}

\begin{figure}[h]
	\centering
	\includegraphics[scale=0.65]{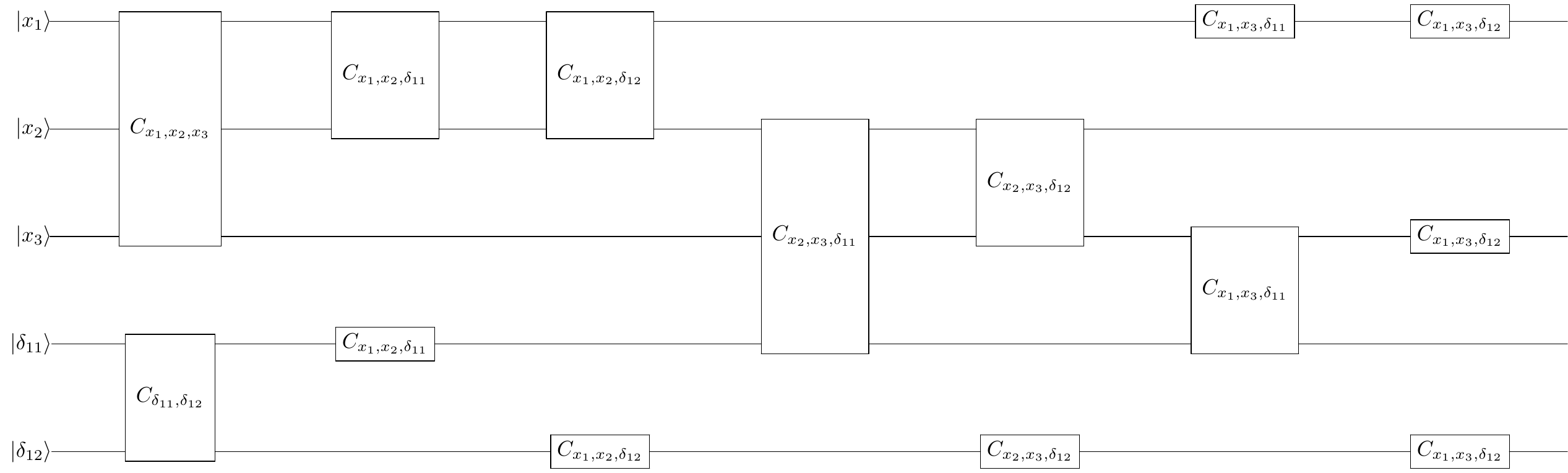}
	\caption{The circuit diagram for \textbf{Example: General Optimization Problem, continued}. We use  $C_{i,j,k}$ to represent $U(C_{i,j,k},\gamma)$ to make the image clearer to read. If multiple qubits are used in one gate but not next to each other in number order, the gate in the diagram appears split, but the pieces are in the same column.}
	\label{fig:pubocircuitgeneral}
\end{figure}

\section{QAOA circuit depth bounds for some combinatorial optimization problems}\label{problemanalysis}

In this section, we review some combinatorial optimization problems and discuss the depth of circuit required for one QAOA iteration of each problem instance. NP-complete problems can be reduced to other NP-complete problems, however the act of reducing may impact the depth of circuit, which may or may not be desirable depending on the hardware.

 \subsection{Vertex Covering}\label{vertexcoveringgeneralsetup}
 
 A vertex cover of a graph $G=(V,E)$ is a collection $S \subset V$ such that for all $xy \in E$, at least one of $x$ or $y$ is contained in $S$. Finding a minimum vertex covering is classically NP-complete \cite{karp1972reducibility} and is written as

\begin{align*}
    & \min\  \sum_{i \in V} x_i & \\
    & \mbox{s.t. } (1-x_i)+(1-x_j) \leq 1 & \forall (i,j) \in E\\
    & x_i \in \{0,1\} &
\end{align*}

%\begin{figure}
 %\centering
 %\begin{tikzpicture}[scale=2]
%\begin{scope}[every node/.style={scale=1,circle,draw}]
%    \node (A) at (1.25,.5) {$x_2$};
%	\node [dashed] (B) at (0,1.25) {$x_1$}; 
%	\node [dashed] (C) at (-1.25,.5) {$x_5$};
%	\node (D) at (-.75,-1) {$x_4$};
%	\node [dashed] (E) at (.75,-1) {$x_3$};

%\end{scope}

%\draw  (A) -- (B);
%\draw  (A) -- (C);
%\draw  (A) -- (E);
%\draw  (B) -- (D);
%\draw  (C) -- (D);
%\draw  (C) -- (E);

%\end{tikzpicture}
%\caption{An example of a vertex cover. The vertices that are dashed circles constitute a vertex cover.}
%\label{vertexcoverexample}
%\end{figure}
%For the problem defined by Figure \ref{vertexcoverexample}, we want to minimize $x_1+x_2+x_3+x_4+x_5$ with the constraints $x_1+x_2 \geq 1$, $x_1+x_4 \geq 1$, $x_2+x_3 \geq 1$, $x_2+x_5 \geq 1$, $x_3+x_5 \geq 1$, and $x_4+x_5 \geq 1$.

%Dualizing this problem, we get $1-x_i-x_j-\delta_{ij}$ for each pair $i$ and $j$. 
As each constraint consists of the sum of two unique variables and a constant, dualizing them gives monomials consisting of at most two distinct variables, as each monomial corresponds to an edge in the graph. The derived graph, has vertex set $V= x_i \cup \delta_{ij}$ for $i \in [n]$ and $j \neq i$ with $j \in [n]$. Note that $\delta_{ij}$ is incident only to vertices $x_i$ and $x_j$, as it only occurs in constraints including those variables. Thus, the depth of circuit is  $2\chi(G)+1 $, so the difficulty of covering problems is directly related to the maximum degree of the problem. Graphs with low degree allow for a shallow circuit in one iteration of QAOA, so they should be suitable for NISQ devices. 

%{\em General Covering}
% Consider a more general covering problem of the form
%\begin{align*}
%    & \min\  \sum_{i\in [n]} x_i & \\
%    & \mbox{s.t. } Ax \geq 1 &
%    & x \in \{0,1\}^n &
%\end{align*}
%\noindent where $A$ is a 0/1 matrix of appropriate size. Depending on the structure of $A$, we may need to add a lot of $\delta$ variables (related to the sum of the rows of $A$). Vertices representing $x$ variables are adjacent in there is a constraint where they both have a coefficient of one. In addition, there is an edge between a vertex representing an x variable and a vertex representing a $\delta$ variable if that $x$ variable has a coefficient of one in the constraint that $\delta$ represents. So, we want

 \subsection{Knapsack}\label{knapsackgeneralsetup}
 
 In the knapsack problem, a collection of objects, $x_i$ for $i \in [n]$, are assigned a weight, $w_i$, and a value, $v_i$. The goal is to maximize the sum of the value of the objects while the sum of the weights of the objects is restricted to be less than some constant $W$. This problem is NP-complete classically, as well \cite{karp1972reducibility}. As an integer program, it is written

\begin{align*}
    & \max\  \sum_{i \in [n]} v_ix_i & \\
    & \mbox{s.t. } \sum_{i \in [n]} w_ix_i \leq W &\\
    & x_i \in \{0,1\} &
\end{align*}

%For example, suppose we have the following items with associated weights and values, and want to restrict the total weight to $15$.

%\begin{table}[h]
%\begin{centering}
% \begin{tabular}{||c c c||} 
 
% \hline
% Item & Weight & Value \\ [0.5ex] 
% \hline\hline
% x_1 & 6 & 7 \\ 
% \hline
% x_2 & 4 & 2\\
% \hline
% x_3 & 3 & 4 \\
% \hline
% x_4 & 2 & 1 \\
% \hline
% x_5 & 6 & 7 \\ [1ex] 
% \hline
%\end{tabular}

%\caption{A list of items, weights, and values for the knapsack problem.}
%\label{knapsackexample}
%\end{centering}
%\end{table}

%We want to maximize $7x_1 + 2x_2 + 4x_3 + x_4 + 7x_5$ subject to $6x_1 + 4x_2 + 3x_3 + 2x_4 + 6x_5 \leq 15$.
%Again, as each monomial consists of at most two variables, the depth of circuit can be exactly determined, making knapsack problems suitable candidates for testing on quantum computing machinery.

Knapsack problems where $W$ is large pose problems for testing on quantum computers because the larger $W$ is, the more $\delta_{ij}$ variables are needed when dualizing the constraint.  The derived graph contains vertices for each $x$ variable and $\delta$ variable and is complete, meaning that its edge coloring has minimum coloring of at least $n + ln(W)$, however pre-processing can be used to reduce the depth of the embedding. Assuming the weights are  ordered such that $w_1 \leq w_2 \leq ... \leq w_n$, the total weight of an optimal solution must be at least $W - w_n$, since, if not, there is room in the knapsack for an additional item. With this in mind, the knapsack constraint can be written as $W - w_n  \leq  \sum_{i \in [n]} w_ix_i \leq W,$ which now requires $\ln(w_n)$ many additional variables and leads to a circuit depth of $n + ln(w_n)$. In order for there to be nontrivial instances of the knapsack problem with small $w_n$, there necessarily must be small $W$. Knapsack problems with small $W$ can be suitable for experimentation, however, classically, problems with bounded $W$ are polynomial, and can be easily solved by conventional computing by dynamic programming \cite{andonov2000unbounded, frieze1976shortest}. Thus, they may not be suitable for quantum computing.

 \subsection{Traveling Salesperson}\label{tspgeneralsetup}
 
 The traveling salesperson problem (TSP) can be viewed as a problem on a graph $G=(V,E)$ where each edge $e$ has an associated weight, $w_e$. The goal is to start in a vertex, say $v_1$, use edges to visit each vertex exactly once, and return to $v_1$, all while minimizing the sum of the weights of the edges used. This problem is classically NP-hard, but there exist some heuristics for the problem\cite{ouaarab2014discrete, masutti2009self}.
 
 Let $x_e$ represent if the salesperson travels along edge $e$.

 one formulation of the problem is  
 
 \begin{align*}
    & \min\  \sum_{e \in E} w_{e}x_{e} & \\
    & \mbox{s.t. } 0 \leq x_{e} \leq 1 \ \forall e \in E \\
    & \sum_{e \ni i } x_{e}=2 \ \forall i \in V\\
    & \sum_{e =(i,\ j),\ i \in Q,\ j \in Q} x_{e} \leq |Q|-1 & Q\subsetneq[n], |Q| \geq 2.
\end{align*}

The derived graph for TSP has vertex set $x_i \cup \delta_{ij}$, where there are two $\delta_{ij}$ variables per constraint. The edges form a complete graph on all $x_i$ vertices, and connect $\delta_{ij}$ to $\delta_{ik}$ for $j \neq k$. As there are two $\delta$ variables per constraint, they form a disjoint collection of edges. The rest of the edges connect every $x_i$ variable to every $\delta_{ij}$ variable. Thus, the maximum degree of the graph is $n-1$ plus twice the number of constraints. Denoting the number of constraints as $N_c$, the depth of circuit is $n-1+2N_c$, where $N_c$ can be large, depending on the problem instance. Similarly to knapsack problems, it can be difficult to implement TSP on NISQ devices because of the subtour constraint, $Q \subsetneq [n]$, and the fact that so many new variables are introduced in dualizing. 
%\begin{figure}
% \centering
% \begin{tikzpicture}[scale=2]
%\begin{scope}[every node/.style={scale=1,circle,draw}]
%    \node (A) at (1.25,.5) {$v_2$};
%	\node (B) at (0,1.25) {$v_1$}; 
%	\node (C) at (-1.25,.5) {$v_5$};
%	\node (D) at (-.75,-1) {$v_4$};
%	\node (E) at (.75,-1) {$v_3$};
%	

%\end{scope}

%\draw  (A) -- (B);
%\draw  (A) -- (C);
%\draw  (A) -- (D);
%\draw  (A) -- (E);
%\draw  (B) -- (C);
%\draw  (B) -- (D);
%\draw  (B) -- (E);
%\draw  (C) -- (D);
%\draw  (C) -- (E);
%\draw  (D) -- (E);

	%\node at (.65,1) {$1$};
	%\node (G) at (1.1,-.3) {$4$}; 
	%\node (H) at (.6,.6) {$2$};
	%\node (I) at (-.65,1.05) {$3$};
	%\node (J) at (-1.1,-.25) {$4$};
	%\node (K) at (0,.-1.15) {$2$};
	%\node (L) at (.7,-.5) {$5$}; 
	%\node (M) at (-.3,.8) {$2$};
	%\node (N) at (-.8,0) {$3$};
	%\node (O) at (-.2,-.8) {$1$};

%\end{tikzpicture}
%\caption{The complete graph on five vertices. We use this as an example of the traveling salesperson problem.}
%\label{tspexample}
%\end{figure}
%Consider the graph in Figure \ref{tspexample}. In this example, we want to minimize

%\begin{align*}
%    x_{12}+5x_{13}+2x_{14}+3x_{15}+4x_{23}+ \\
%   x_{24}+ 2x_{25}+2x_{34}+3x_{35}+4x_{45}. \\
%    \end{align*}

  \subsection {SAT}\label{3satgeneralsetup}
  In Boolean satisfiability problems (SAT), there are a set of clauses, $C$, containing a set of literals, $N$. The goal is to determine if the values of TRUE or FALSE can be assigned to each literal in a clause such that it evaluates to TRUE. This problem, again, is classically NP-complete \cite{karp1972reducibility}, even when each clause contains only three literals. Let $\{z_c\}_{c \in C}$ be a collection of indicator variables for clauses in three variables, where $z_i=0$ if clause $i$ is satisfied and $1$ if not. Let $x_i$ be the indicator variable denoting if literal $i$ is satisfied. Let $TRUE_c$ ($FALSE_c$) be the set of literals that must be true (false) to satisfy clause $c$. Then, the problem can be written as
  
    \begin{align*}
    & \min\  \sum_{c \in C} z_c & \\
    & \mbox{s.t. } \sum_{x_{i}= TRUE_c } x_{i} +\sum_{x_{i}= FALSE_c} (1-x_{i}) \geq 1+z_c & \forall c \in C\\
    &x_i,\ z_c \in \{0,1\} 
\end{align*}

However, taking the contrapositive, we have $\sum_{x_{i} = TRUE} (1- x_{i}) + \sum_{x_{i} = FALSE} x_{i} \leq 2 +z_c$. The derived graph, is again, a graph consisting of all $x_i$ vertices and  two dummy variables, $\delta_{c}^1$ and $\delta_c^2$, per clause. If $x_i$ appears in the set of clauses $C_{x_i} \subset C$, the degree of the $x_i$, $d_{x_i}$, is $d_{x_i} =  |\cup_{c \in C_{x_i}}c|-1 + 2|C_{x_i}|$. SAT can be a good problem for NISQ devices if the set of literals is large while the number of literals in each clause and the number of clauses are relatively small, as this guarantees a literal cannot occur in many clauses and each literal does not appear in clauses with several others.

%\subsection{Steiner triple covering}

%A Steiner system is a $t$-design, $S(t,k,v)$, where $v$ is the size of a set $S$, $k$ is the size of each block $B \subset S$, and any $t$ points are contained in precisely one block. A Steiner system is a Steiner triple if $B=3$ and $k=2$. A covering of a Steiner system is a set $C \subset S$ such that the intersection of $C$ with any triple in $B$ is nonempty, and the size of the smallest such $C$ is called the incidence width. The Steiner triple covering problem is computing the incidence width of a Steiner triple.

%This problem is classically  \rh{NP hard? NP complete? cite sources here}, however it is a promising candidate for implementation on NISQ devices. 

\section{Discussion}\label{conclusion}

We have shown how to map arbitrary combinatorial optimization problems to polynomial unconstrained binary optimization problems (PUBOs) by dualizing constraints, and applied the method to a few combinatorial optimization problems. Additionally, we discussed how use the PUBOs to derive a graph that represents problem instances and used this to show that the depth of the QAOA circuit needed to run the problem is $\chi'(G)+1$.  We then considered various combinatorial optimization problems and determined the depth of circuit required to use QAOA to solve them. In particular, since the Vertex Covering problem has a low depth of circuit, it appears to be suitable for NISQ devices, as do instances of SAT problems that have large sets of literals but few clauses and few literals in each clause. Due to the number of new variables that must be introduced to dualize Knapsack and TSP, they do not appear to be good problems to test on NISQ devices.

 Clearly, the maximum degree of a vertex affects the circuit depth in combinatorial optimization problems in which each monomial consists of at most two unique variables, such as MaxCut and MaxIndSet. Specifically, the depth of the QAOA circuit is $\chi'(G)+1$, where $\chi'(G)$ for graphs that are not hypergraphs is $\Delta$ or $\Delta+1$, by a classic result of Vizing \cite{vizing1964estimate}. In the case of monomials of at least three variables, a lower bound for the circuit depth is the number of colors needed in a proper edge coloring of the associated hypergraph, $H$, which is a difficult problem. A trivial lower bound on this number is the maximum degree of $H$ while a trivial upper bound is the number of edges in $H$.

%In problems that consist of monomials each with at most two variables, if $G$ is not bipartite, it contains an odd cycle, ensuring that the depth of the circuit needed to perform QAOA is at least four. This is because at least three colors are needed to color odd cycles. Additionally, $G$ having a large clique is another factor that leads to a large circuit depth, as $\Delta(G)$ is at least the degree of any vertex in the large clique. 
The depth of circuit is hard to determine in part because when dualizing, squaring $p(x)$ can potentially yield monomials with larger support than any in $c$. Sparser constraints are preferable because the polynomials have smaller support, which decreases the size of each gate. However, sparser constraints does not imply a shallower circuit depth. For example. consider MaxCut on a star graph on $n$ vertices, that is a connected bipartite graph in which one part contains one vertex and the other contains $n-1$. The depth of circuit is $n$, as each edge must have a unique color.

As any combinatorial optimization problem can be mapped to a PUBO via dualizing constraints, we can examine the resulting QAOA circuit and bound the depth of it by the edge coloring of the hypergraph associated to the problem instance. Although current hardware is limited to two qubit gates, larger gates can be decomposed into two qubit gates. It would be interesting to see if there is a way to construct a graph associated to the decomposed gates and if its chromatic number, or some other property of the graph, determines the depth of circuit. 
\section*{Acknowledgements}

This work was supported by DARPA ONISQ program under award W911NF-20-2-0051. J. Ostrowski acknowledges the Air Force Office of Scientific Research award, AF-FA9550-19-1-0147.

This manuscript has been authored by UT-Battelle, LLC under Contract No. DE-AC05-00OR22725 with the U.S. Department of Energy. The United States Government retains and the publisher, by accepting the article for publication, acknowledges that the United States Government retains a non-exclusive, paid-up, irrevocable, world-wide license to publish or reproduce the published form of this manuscript, or allow others to do so, for United States Government purposes. The Department of Energy will provide public access to these results of federally sponsored research in accordance with the DOE Public Access Plan. (http://energy.gov/downloads/doe-public-access-plan).
\bibliographystyle{abbrv}
\bibliography{references}
\end{document}